\newcommand\arXiv[1]{\href{http://arxiv.org/abs/#1}{\nolinkurl{arXiv:#1}}}
\newcommand\MRnumber[1]{\href{http://www.ams.org/mathscinet-getitem?mr=#1}{\nolinkurl{MR#1}}}
\newcommand\DOI[1]{\href{http://dx.doi.org/#1}{\nolinkurl{DOI:#1}}}
\newcommand\MAILTO[1]{\href{mailto:#1}{\nolinkurl{#1}}}
\newcounter{mainthm}
\newtheorem{dummy}{Dummy}[section]
\newtheorem{lemma}[dummy]{Lemma}
\newtheorem{proposition}[dummy]{Proposition}
\newtheorem{definition}[dummy]{Definition}
\theoremstyle{definition}
\newtheorem*{rem}{Remark}
\newtheorem*{example}{Example}
\renewcommand\mathbb\mathds
\newcommand\bC{\mathbb C}
\newcommand\bT{\mathbb T}
\newcommand\bZ{\mathbb Z}
\newcommand\cE{\mathcal E}
\newcommand\cH{\mathcal H}
\newcommand\cM{\mathcal M}
\newcommand\cT{\mathcal T}
\newcommand\cZ{\mathcal Z}
\newcommand\rB{\mathrm B}
\newcommand\rH{\mathrm H}
\newcommand\rL{\mathrm L}
\newcommand\rO{\mathrm O}
\newcommand\rR{\mathrm R}
\newcommand\rU{\mathrm U}
\newcommand\SO{\mathrm {SO}}
\newcommand\SU{\mathrm {SU}}
\newcommand\SL{\mathrm {SL}}
\newcommand\Dih{\mathrm {Dih}}
\newcommand\longto\longrightarrow
\newcommand\mono\hookrightarrow
\newcommand\epi\twoheadrightarrow
\newcommand\<\langle
\renewcommand\>\rangle
\newcommand\sminus\smallsetminus
\DeclareMathOperator{\Sym}{Sym}
\DeclareMathOperator{\Spin}{Spin}
\DeclareMathOperator{\Pin}{Pin}
\DeclareMathOperator{\Dic}{Dic}
\DeclareMathOperator{\Tr}{Tr}
\newcommand\define[1]{\emph{#1}}
\title{ Genus-One data and Anomaly Detection} 
\author[Yu]{ Matthew Yu$^{1}$}
\thanks{It is a pleasure to thank Theo Johnson-Freyd for proposing this question, engaging in many useful discussions, and giving comments on the draft.
The author would also like to thank Justin Kulp for sharing his knowledge regarding 2d CFTs,
for many helpful discussions, and for giving comments on the draft. Research at the Perimeter Institute is supported by the Government of Canada through Industry Canada and by the Province of Ontario through the Ministry of Economic Development and Innovation. 
 The Perimeter Institute is in the Haldimand Tract, land promised to the Six Nations.
 \\[6pt]
$^1$ \textsc{Perimeter Institute for Theoretical Physics, Waterloo, Ontario}.
\\[6pt]
 \MAILTO{myu@perimeterinstitute.ca}
}
\begin{document}
\begin{abstract}
We introduce the notion of genus-one data for theories in (1+1)-dimensions with an anomalous finite group global symmetry. We outline the groups for which genus-one data is effective in  detecting the anomaly, and also show that genus-one data is insufficient to detect the anomaly for dicyclic groups. 
\end{abstract}
\maketitle

\section{Introduction}
Theories in $d$ spacetime dimensions with a global symmetry group $G$ can have obstructions to promoting the global symmetry to a gauge symmetry. In field theory, one way to work with a global symmetry is to couple it to a background gauge field.  Promoting the symmetry to a gauge symmetry is the same as asking whether it is possible to integrate over these background fields in the path integral, a process known as \textit{gauging} or in other contexts \textit{orbifolding} \cite{gaiotto2020orbifold}.  When gauging is not possible for a certain symmetry, then we say that the theory has an  't Hooft anomaly, that is, an  obstruction classified by a class in $\rH^{d+1}(G \,; \rU(1))$ when the dimension is low.  This means that anomalies are inherently topological in nature, and are moreover robust to deformations by local operators.  These deformations may flow the theory to be in a strongly coupled regime, which makes the dynamics hard to discern.  Information about the anomalies puts constraints on the dynamics, enough so that we are able to make conjectures about the strongly coupled phases.  The anomaly is always present along the renormalization group flow, so whatever value the anomaly takes in, say, a weakly coupled regime, must be matched in the strongly coupled regime. 

If the symmetry has an anomaly, then detecting the anomaly, i.e. determining what value the anomaly takes is often not a very systematic process and depends on the symmetry at hand.  One such way of detecting the anomaly is to study the Hilbert space of the theory on some manifold, such as the torus \cite{Delmastro:2021xox}.  
However, there is no guarantee that one can detect all such anomalies for any symmetry simply by applying one particular method.  It was shown in \cite{Lin:2021udi} how to detect anomalies of $\bZ_N$ global symmetry in (1+1)d unitary conformal field theory, but not much attention has been given to anomalies of nonabelian global symmetry.
We will be interested in a method of detecting anomalies in (1+1)d theories by constructing a stack $\mathcal{M}^G$. This stack will contain information of the theory when placed on a torus with a $G$-bundle, for $G$ a finite group. In the full construction of $\cM^G$ we will have to quotient by automorphisms of the torus and trivializations of the $G$-bundle.  We refer to this stack and automorphism information as \textit{genus-one} data.
Over each point of this stack is a torus bundle, and by integrating the anomaly $\alpha \in \rH^3(\rB G\,; \rU(1))$ over the torus bundle, we see that $\cM^G$ furnishes a line bundle.  The kernel of this integration map is precisely the failure to detect $\alpha$. 

We remark in passing an application of this line bundle.  In problems involving Moonshine there is a connection between ``analytic" data involving modularity and growth rates of certain functions, with representations of finite groups.  The modularity is particularly important because this combines with the finite groups into holomorphic sections of a line bundle on $\cM^G$ \cite{Cheng:2016nto}. This line bundle is the integral of $\alpha$ (in all computed examples), i.e., it is the image of an anomaly. If one is interested in the physical reason which unites the two separated pieces of data given in the Moonshine, it is useful to search for this anomaly itself for this information. 

The goal of this paper is to show that 
\begin{proposition}\label{notDetectable}
The genus-one data applied to detect anomalies for the symmetry given by the dicyclic group of order $4N$, $\Dic_N$, has an undetectable $\bZ_2$ kernel. 
\end{proposition}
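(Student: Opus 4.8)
The plan is to translate anomaly detection into a homological pairing and then to identify, for $G=\Dic_N$, exactly which homology classes the genus-one construction produces. Since $\Dic_N$ is finite, the anomaly group is $\rH^3(\rB\Dic_N\,;\rU(1))\cong \rH^4(\rB\Dic_N\,;\bZ)\cong\bZ_{4N}$ (dicyclic groups have $4$-periodic cohomology), and this pairs perfectly with $\rH_3(\rB\Dic_N\,;\bZ)\cong\bZ_{4N}$. Under this pairing the line bundle furnished by $\cM^G$ has holonomy around a loop equal to $\int_M\alpha$, where $M$ is the corresponding mapping torus, i.e.\ a $T^2$-bundle over $S^1$ carrying a $\Dic_N$-bundle. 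Hence the undetectable anomalies are exactly the annihilator of the subgroup $K\le\rH_3(\rB\Dic_N\,;\bZ)$ generated by the classes of all such $\Dic_N$-decorated torus bundles, and the Proposition becomes the statement that $K$ has index $2$, so that its annihilator is the unique order-$2$ subgroup $\langle 2N\rangle\cong\bZ_2$.

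Second, I would record the subgroup structure of $\Dic_N=\langle a,b\mid a^{2N}=1,\ b^2=a^N,\ bab^{-1}=a^{-1}\rangle$. The element $a^N$ is the unique involution, so every abelian subgroup is cyclic; the maximal cyclic subgroups are $\langle a\rangle\cong\bZ_{2N}$ (normal, of index $2$) and the order-$4$ groups $\langle a^i b\rangle\cong\bZ_4$. A commuting pair, and the commuting triple underlying a $T^3$, therefore always lands in a cyclic subgroup, while the monodromy of a nontrivial torus bundle is an element normalizing that subgroup and acting through its Weyl group $N_G(C)/C_G(C)$. Using $\Dic_N\subset SU(2)$ and the periodicity generator $y=c_2(V)$ of the defining $2$-dimensional representation $V$, restriction to $\langle a\rangle$ gives $-x^2$ and restriction to $\langle a^i b\rangle$ gives $-x_4^2$; both are surjective, so $\alpha_0:=2N\cdot y$ restricts to $0$ on $\langle a\rangle$ and to $2\in\rH^3(\bZ_4\,;\rU(1))$ on $\langle a^i b\rangle$, the latter being nonzero precisely when $N$ is odd.

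Third comes the computation of $K$. Because $\Lambda^3$ of a cyclic group vanishes, every torus bundle with trivial monodromy ($M=T^3$) contributes $0$, so only monodromies acting nontrivially on the cyclic subgroup matter. Here the Weyl groups are decisive: $N_G(\langle a\rangle)/C_G(\langle a\rangle)\cong\bZ_2$ acts by inversion (realised by $b$), whereas for $N$ odd one checks that $\langle a^i b\rangle$ is self-normalizing, so the $\bZ_4$'s admit no nontrivial monodromy and contribute nothing to $K$ despite the nonzero restriction above. Thus, for every $N$, the detectable classes are generated by the mapping tori built from $\langle a\rangle\cong\bZ_{2N}$ with inversion monodromy $k=a^j b$. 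I would evaluate the transgression of $\alpha$ on these explicitly---either by assembling the Dijkgraaf--Witten phase $\beta_g(h)$ along the monodromy, or by a direct $3$-cocycle slant-product computation---and show that the resulting classes fill out exactly the index-$2$ subgroup $\langle 2\rangle\subset\bZ_{4N}$; equivalently, that $\alpha_0$ pairs trivially with all of them while every anomaly of odd order in $\bZ_{4N}$ does not.

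The main obstacle is precisely the $\bZ_4$/parity subtlety. For $N$ odd the order-$2$ anomaly $\alpha_0$ has nonzero restriction $2\in\rH^3(\bZ_4\,;\rU(1))$, and in a stand-alone $\bZ_4$ theory this class is detected by genus-one data; the content of the Proposition is that this detection is unavailable inside $\Dic_N$ because the genus-one construction only ever produces torus bundles, and the $3$-manifold that would detect $2\in\rH^3(\bZ_4\,;\rU(1))$, namely the lens space $S^3/\bZ_4$, is not such a bundle---reflected group-theoretically in $\langle a^i b\rangle$ being self-normalizing so that $\Lambda^3$-vanishing forces a zero contribution. The delicate half is therefore the lower bound $K\supseteq\langle 2\rangle$: showing that the $\bZ_{2N}$-with-inversion mapping tori generate the full index-$2$ subgroup rather than something strictly smaller, which I expect to require the explicit transgression/cocycle evaluation rather than a purely formal argument.
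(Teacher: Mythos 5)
Your reformulation --- compute the subgroup $K\le \rH_3(\rB \Dic_N;\bZ)\cong\bZ_{4N}$ generated by the decorated mapping-torus classes and take its annihilator --- is a legitimate and in some ways cleaner route than the paper's (which works directly with modular multipliers of twisted partition functions and then a Serre spectral sequence/gauging argument for the residual $\bZ_2$). In your setup the lower bound $K\supseteq\langle 2\rangle$ is in fact the \emph{easy} half, not the delicate one: the configuration $x=e$, $y=a$, $k=e$, $\gamma=T^{2N}$ satisfies the genus-one constraint, its class generates $\rH_3(\rB\bZ_{2N};\bZ)\cong\bZ_{2N}$, and since restriction $\rH^3(\Dic_N;\rU(1))\to\rH^3(\langle a\rangle;\rU(1))$ is surjective the pushforward $\bZ_{2N}\hookrightarrow\bZ_{4N}$ has image exactly $2\bZ_{4N}$. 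No inversion monodromy is needed for that.

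The genuine gap is in the opposite bound, specifically your claim that the order-$4$ subgroups $\langle a^ib\rangle$ ``admit no nontrivial monodromy and contribute nothing to $K$.'' Self-normalization only forces the group-theoretic monodromy $k$ to lie in $\langle a^ib\rangle$; it does not force $\gamma=1$, and a torus bundle with $k=e$ but parabolic $\gamma$ is not a $\bT^3$, so the $\Lambda^3$-vanishing argument does not apply. Concretely, $x=e$, $y=a^ib$, $k=e$, $\gamma=T^4$ satisfies the constraint \eqref{genusOneConstraint}, and this is precisely the paper's lens-space mechanism for cyclic groups (the $T^n$ mapping cylinder is $G$-equivariantly cobordant to $L(n,1)$, so your remark that $S^3/\bZ_4$ ``is not such a bundle'' is not an available escape): its class generates $\rH_3(\rB\bZ_4;\bZ)\cong\bZ_4$, whose image in $\bZ_{4N}$ is the order-$4$ subgroup $N\bZ_{4N}$, by duality with the surjection you yourself computed via $c_2(V)\mapsto -x_4^2$. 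Hence $K\supseteq\langle 2,N\rangle=\gcd(2,N)\,\bZ_{4N}$. For $N$ even this is still $\langle 2\rangle$ and your conclusion stands; for $N$ odd it is all of $\bZ_{4N}$, and indeed $\alpha_0=2N$ pairs with the $\bZ_4$ class to $\exp(\pi iN)=-1\neq 1$, so the kernel is trivial. Your argument therefore cannot close the upper bound as written, and the corrected computation shows the statement needs $N$ even (the paper's own dismissal of the purely cyclic configurations is too quick on exactly this point, but its intended examples $Q_8$ and the generalized quaternion $2$-groups all have $N$ even). Separately, even in the even case you defer the key check --- that the inversion-monodromy mapping tori with $k=a^jb$ stay inside $\langle 2\rangle$ --- to an unperformed transgression computation; this is the step the paper actually carries out via $\cT_{g^{-1}}=\cT_{hgh^{-1}}=\cT_g$.
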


The structure of the paper is as follows: in section \ref{genus1section} we spell out the conditions that are specific to genus-one data, along with the construction of the stack $\mathcal{M}^G$. We also explain how to break down the question from a general finite group to studying $p$-groups.  In section \ref{partitionfunctionssection} we recast the method of detecting anomalies associated to a line bundle over $\cM^G$, to finding phases of 2d partition functions which are eigenvalues of acting with modular transformations. We investigate how genus-one constraints affect our ability to detect anomalies of dicyclic groups and show Proposition \ref{notDetectable}.  Section \ref{WZWexampleSection} contains an example where we apply the techniques of manipulating partition functions to see if we can fully detect the anomaly for $\widehat{\SU}(2)_k$ WZW model with quaternion symmetry. 
\section{Genus-One Data}\label{genus1section}
Consider a theory in (1+1)d which enjoys a global symmetry $G$.  We start with a stack~$\cM^G =(E, P)$ where $E$ is oriented and there exists an isomorphism $E \simeq \bT^2$. Furthermore, we equip $E$ with a $G$-bundle where $P: E \to \rB G$.  This stack has a standard presentation as follows: for any choice of isomorphism $f: E \to \bT^2$, the map $P\circ f$ is a $G$-bundle on the standard torus that has holonomies along the two cycles. We also choose a trivialization, $\varphi$, of $P\circ f$ at some basepoint that we will take to be the origin of $\bT^2$.  The stack $\cM^G$ is a quotient under the automorphisms of these two choices extra choices.  We can therefore write a stack $\widetilde{\mathcal{M}}^G$ that is a covering  stack of $\mathcal{M}^G$, more specifically, $\widetilde{\mathcal{M}}^G = \{E,P,f,\varphi\}$.  Once we have chosen $f$ then $E$ is no more data, so we are now talking about the space of bundles of the standard torus trivalized at the origin. This is the same as the set of maps 
\begin{equation}
     \hom(\pi_1 \bT^2|_{\text{origin}}\,,G) = \{(x,y) \in G \times G \,|\, [x,y]=1\},
\end{equation}
i.e. $\widetilde{\mathcal{M}}^G$ is the set of commuting pairs in $G$.
The map from $\widetilde{\mathcal{M}}^G \to {\mathcal{M}}^G$ presents $\mathcal{M}^G$ as a quotient groupoid of $\widetilde{\mathcal{M}}^G$ by forgetting the data of $f$ and $\varphi$. We note that $G$-bundles at a point are always trivalizable and there are $|G|$ many trivialization, so in order to forget $\varphi$ we quotient $\hom(\bZ^2\,,G)$ by ``changes of trivialization".  This gives $\hom(\bZ^2\,,G)// G$, where the $G$ action is by conjugation on the holonomies. In other words, the action on $g$ on  $(x,y)$
is given by 
\begin{equation}
    (x,y) \triangleleft g := (g^{-1} x g\,, g^{-1} y g )\,.
\end{equation}
To forget the data of $f$, we use the fact that any two isomorphism differ by an automorphism of the standard two-dimensional torus. We therefore also left-quotient $\hom(\bZ^2, G)$ by the group $\SL(2,\bZ )$. An element $\gamma \in \SL(2,\bZ)$ acts on $(x,y)^{\mathsf{T}}$, where $\mathsf{T}$ denotes the transpose of the row vector, by \begin{equation}
    \underset{\gamma}{\underbrace{\begin{pmatrix} a & b\\
    c & d
    \end{pmatrix}}} \triangleright (x,y)^{\mathsf{T}} = (x^a y^b, x^c y^d)^{\mathsf{T}}\,.
\end{equation}
\begin{rem}
We are using the fact that $x$ and $y$ commute so that the above formula gives an action.  The two actions by $\gamma$ and $g$ also commute with each other.  We see that as a groupoid
\begin{equation}
   \cM^G = \SL(2,\bZ ) \backslash\backslash \hom(\bZ^2, G) //G. 
\end{equation}
\end{rem}
Over each point in $\cM^G$ lives a torus bundle $\cE^G = \{E,P,z\in E\}$, where $z$ is a point in the torus $E$ and the fibers of the map to $\cM^G$ are oriented 2-tori which are the ``points" $E$ themselves in $\cM^G$. The map $P$ now takes $\cE^G \to \rB G$ by mapping $(E,P,z) \mapsto P(z)$.
If the theory has an anomaly $\alpha \in \rH^3(G; \rU(1))$ which maps $\rB G\to \rU(1)[3]$, then we can use the composed maps $P^* \alpha$ as a map from $\cE^G \to \rU(1)[3]$.  Here, the brackets denote the degree of suspension for the regular group $\rU(1)$.  Therefore, $\cM^G$ carries a line bundle which are the maps $(E,P) \mapsto \int_{E} P^* \alpha$; a line bundle over a groupoid is the same data as associating to every automorphism in the groupoid a $\rU(1)$ number.  In particular, a typical object of $\cM^G$ given by $(x,y)^{\sf{T}}$ and a typical automorphism of this object is given by $(\gamma, g)$ so that 
\begin{equation}\label{genusOneConstraint}
    \gamma \triangleright (x,y)^{\sf{T}}= (x,y)^{\sf{T}} \triangleleft
    g\,.
\end{equation}
Thus, in order to give the information about the line bundle, we need to assign for each point $(x,y)$ a group homomorphism, which is $\int_{E}P^* \alpha = \int \alpha: (\gamma,g) \to \rU(1)$. We do this in the following way. We start with a standard two-torus and wrap along the $a$ and $b$ cycle the elements $x$ and $y$, which attaches a $G$-bundle to this torus.  We now take the cylinder on the $G$-bundle, but apply a twist $\gamma$ to the two cycles.  Then, we take $g$ to change the trivialization of the $G$-bundle to return to a configuration that matches what we started with, and lastly identify the starting and ending tori. This procedure is depicted in Figure \ref{mappingCylinder}.  
\begin{figure}[t!]
    \centering
    \begin{tikzpicture}
    \draw[thick] (0,0) ellipse (1cm and 2cm);
    \draw[thick,domain=180:360,smooth,variable=\x] plot ({.2+.4*sin(\x)},{.9*cos(\x)});
\draw[thick,domain=23:157,smooth,variable=\x] plot ({-.1+.4*sin(\x)},{.9*cos(\x)});
\draw[thick] (0,2) -- (3.5,2);
\draw[thick] (0,-2) -- (3.5,-2);
\draw[thick] (3.5,2) -- (7,2);
\draw[thick] (3.5,-2) -- (7,-2);
\begin{scope}[rotate=-90,shift={(1.3,-.2)}]
\draw[thick,red,domain=10:130,smooth,variable=\x] plot ({.7*cos(\x)},{.4*sin(\x)});
\draw[thick,red,dashed,domain=190:300,smooth,variable=\x] plot ({.2+.7*cos(\x)},{.3+.4*sin(\x)});
\end{scope}
\draw[thick,blue] (0,.1) ellipse (.6cm and 1.4cm);
 \draw[thick] (3.5,0) ellipse (1cm and 2cm);
    \draw[thick,domain=180:360,smooth,variable=\x] plot ({3.5+.2+.4*sin(\x)},{.9*cos(\x)});
\draw[thick,domain=23:157,smooth,variable=\x] plot ({3.5+-.1+.4*sin(\x)},{.9*cos(\x)});
\draw[thick,blue] (3.5,.1) ellipse (.6cm and 1.4cm);
\begin{scope}[rotate=-90,shift={(1.3,-.2+3.5)}]
\draw[thick,red,domain=10:130,smooth,variable=\x] plot ({.7*cos(\x)},{.4*sin(\x)});
\draw[thick,red,dashed,domain=190:300,smooth,variable=\x] plot ({.2+.7*cos(\x)},{.3+.4*sin(\x)});
\end{scope}
 \draw[thick] (7,0) ellipse (1cm and 2cm);
    \draw[thick,domain=180:360,smooth,variable=\x] plot ({7+.2+.4*sin(\x)},{.9*cos(\x)});
\draw[thick,domain=23:157,smooth,variable=\x] plot ({7+-.1+.4*sin(\x)},{.9*cos(\x)});
\draw[thick,blue] (7,.1) ellipse (.6cm and 1.4cm);
\node at (0,-2.25) {\footnotesize$\color{red}{x}$};
\node at (.0,1.75) {\footnotesize$\color{blue}{y}$};
\node at (3.5,-2.25) {\footnotesize$\color{red}{\gamma x}$};
\node at (3.5+3.5,-2.25) {\footnotesize$\color{red}{\gamma x g}$};
\node at (3.5+3.5,1.75) {\footnotesize$\color{blue}{\gamma y g}$};
\node at (3.5,1.75) {\footnotesize$\color{blue}{\gamma y}$};
\node at (1.75,-2.5) {\scalebox{1.25}{$\color{black}{\gamma}$}};
\node at (1.75+3.5,-2.5) {\scalebox{1.25}{$\color{black}{g}$}};
\begin{scope}[rotate=-90,shift={(1.3,-.2+7)}]
\draw[thick,red,domain=10:130,smooth,variable=\x] plot ({.7*cos(\x)},{.4*sin(\x)});
\draw[thick,red,dashed,domain=190:300,smooth,variable=\x] plot ({.2+.7*cos(\x)},{.3+.4*sin(\x)});
\end{scope}
    \end{tikzpicture}
    \caption{Each two-torus has, wrapped along its cycles, a commuting pair of elements $x,y \in G$. In the third direction we draw the mapping cylinder first acting by $\gamma$ and then by $g$ between two-tori, with the ends identified.}
    \label{mappingCylinder}
\end{figure}
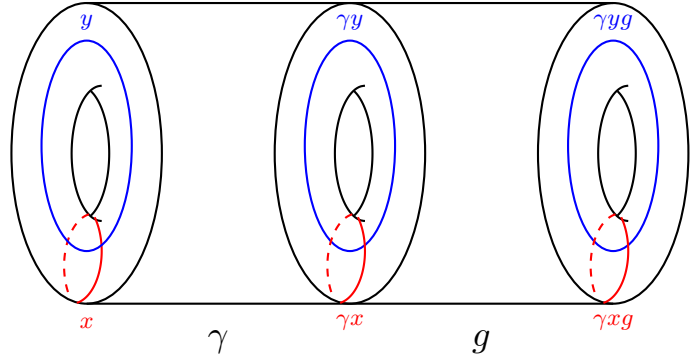 This gives a closed 3-manifold with a $G$-bundle that we can integrate $\alpha$ over. A form of this construction was given by \cite{ganter2009hecke} \footnote{While this reference constructs the analogue of $\cM^G$ with a line bundle, an action by any automorphism of the $G$-bundle on the line bundle
is given by multipication with the Chern-Simons invariant of the glued
mapping cylinder.}. 

The overall question can now be phrased in terms of the line bundle as follows: given~$\int \alpha \in \rH^1(\cM^G; \rU(1))$, with $\alpha$ an anomaly in $\rH^3(\rB G; \rU(1))$, then is it possible to determine the value of~$\alpha$?  The kernel of the map $ \rH^3(\rB G; \rU(1)) \overset{\int}{\rightarrow} \rH^1(\cM^G; \rU(1))$ is exactly our failure to be able to detect the anomaly. For any $G$, we can choose to restrict to a $p$-Sylow subgroup, i.e. a maximal $p$-group where every element is a power of $p$, denoted by $S$; we can do this prime by prime. 
It is therefore possible to restrict the cohomology $\rH^3(\rB G; \rU(1))$ along the $S$ subgroup,
\begin{center}
    \begin{tikzcd}
    {\rH^3(\rB G; \rU(1))_{(p)} } {\arrow[rr]} \arrow[d, hook,shift left=4] &  & {\rH^1(\cM^G; \rU(1))} \arrow[d] \\
{\rH^3(\rB S; \rU(1))_{(p)}} \arrow[rr]                                                                                &  & {\rH^1(\cM^S; \rU(1))}\,,                                        
\end{tikzcd}
\end{center}
where the subscript $p$ denotes $p$-local cohomology. The map from $\rH^3(\rB G; \rU(1))_{(p)}$ to~$\rH^3(\rB S; \rU(1))_{(p)}$ is a $p$-local injection \cite[\S XII.8]{CartanEilenberg}, but not an injection on the full cohomology, unless one takes a product over all $p$.  If there exists $G$ so that integration is not an injection, then there must be an $S$ such that integration is not an injection. To study this question on all groups we therefore focus on the $p$-groups.  

Given that $S$ is a $p$-group, we can use a fundamental fact of $p$-groups which states that for any $p$-group there exists a central order-$p$ element, thus we have $S = \bZ_p \,. S'$. To break down the problem even further, we can temporarily restrict $\alpha$ to the $\bZ_p$ subgroup, then by naturality we have $\int \alpha \big|_{\bZ_p}=\left(\int \alpha\right)\big|_{\cM^{\bZ_p}}$. We see that the central element does not contribute to the kernel by using the fact that:
\begin{lemma}\cite[\S3.3]{Gaberdiel:2012gf}
The map $\int \alpha|_{\bZ_p}: \rH^3(\rB \bZ_p; \rU(1)) \to \rH^1(\cM^{\bZ_p}; \rU(1))$ is injective, i.e., if $\int \alpha|_{\bZ_p}=1$ then $\alpha|_{\bZ_p}=1$.
\end{lemma}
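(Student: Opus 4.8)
The plan is to reduce the statement to a single Chern--Simons evaluation. Since $\bZ_p$ is abelian the conjugation action is trivial, so $\hom(\bZ^2,\bZ_p)=\bZ_p\times\bZ_p$ and the automorphisms of a point $(x,y)$ of $\cM^{\bZ_p}$ are exactly the pairs $(\gamma,g)$ with $\gamma\triangleright(x,y)^{\mathsf T}=(x,y)^{\mathsf T}$, the element $g$ playing no further role. By the footnote to Figure~\ref{mappingCylinder} the value of $\int\alpha|_{\bZ_p}$ on such an automorphism is the invariant $\exp\!\big(2\pi i\int_M\rho^*\alpha\big)$ of the mapping torus $M=M_{\gamma,g}(x,y)$ carrying the $\bZ_p$-bundle $\rho$ read off from the holonomies. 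Because $\rU(1)$ is an injective $\bZ$-module, $\rH^3(\rB\bZ_p;\rU(1))\cong\hom(\rH_3(\rB\bZ_p;\bZ),\rU(1))$ with perfect evaluation pairing, and under this identification $\int_M\rho^*\alpha=\alpha(\rho_*[M])$. Since $\rH_3(\rB\bZ_p;\bZ)\cong\bZ_p$ is cyclic of prime order, the map $\int\alpha|_{\bZ_p}$ is either injective or zero; so it suffices to exhibit one mapping torus $M$ for which $\rho_*[M]$ generates $\rH_3(\rB\bZ_p;\bZ)$, equivalently is nonzero.

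First I would explain why the monodromy must be nontrivial. Taking $\gamma=I$ produces $M=\bT^3$ with holonomies $(x,y,g)$ along the three cycles, and $\rho_*[\bT^3]$ is then the triple Pontryagin product of three degree-one classes in $\rH_*(\rB\bZ_p;\bZ)$. As $\rH_2(\rB\bZ_p;\bZ)=0$, already the square of the degree-one generator vanishes, hence so does $\rho_*[\bT^3]$. Thus torus data with trivial gluing never detects $\alpha$, and the role of a nontrivial $\gamma$ is precisely to inject $p$-torsion into $\rH_1$ of the mapping torus.

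Accordingly I would take the hyperbolic element $\gamma=\left(\begin{smallmatrix}1&1\\ p&p+1\end{smallmatrix}\right)$, which has $\det=1$ and $\Tr=p+2$, together with the fixed pair $(x,y)=(g,e)$ for $g$ a generator and trivial change of trivialization; one checks $\gamma\triangleright(g,e)^{\mathsf T}=(g,\,g^{p})^{\mathsf T}=(g,e)^{\mathsf T}$, so this is a genuine automorphism. The mapping torus $M$ is the $\bT^2$-bundle over the circle with monodromy $\gamma$, and the Wang sequence gives $\rH_1(M;\bZ)\cong\bZ\oplus\mathrm{coker}(\varphi-I)$, where $\varphi=\gamma^{\mathsf T}$ is the induced monodromy on $\rH_1(\bT^2)$. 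Here $\det(\varphi-I)=2-\Tr\gamma=-p$ and the entries have no common factor, so the cokernel is cyclic of order $p$; moreover its generator is the fiber class on which $\rho$ is nontrivial, so $\rho$ descends to a surjection $\rH_1(M)\twoheadrightarrow\bZ_p$ and the classifying map factors as $M\to S^1\times\rB\bZ_p\to\rB\bZ_p$.

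The crux will be to show that $\rho_*[M]$ is in fact a generator of $\rH_3(\rB\bZ_p;\bZ)$ — for prime $p$ it is enough that it is nonzero. By the Künneth theorem $\rH_3(S^1\times\rB\bZ_p;\bZ)\cong\rH_3(\rB\bZ_p;\bZ)$, so $\rho_*[M]$ is computed by the $\rB\bZ_p$-component of the fundamental class; equivalently, writing $u=\rho^*x\in\rH^1(M;\bZ_p)$ for the reduction of the torsion generator and $\beta$ for the Bockstein, I must verify the lens-space relation $\langle u\cup\beta u,[M]\rangle\neq0$ (which for $p=2$ reads $\langle u^3,[M]\rangle\neq0$). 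This is the one genuinely nontrivial step: it is the same nonvanishing that makes $u\cup\beta u$ generate $\rH^3(L(p,1);\bZ_p)$, now transported to the mapping torus $M$ through the identification of their $p$-torsion. I would establish it either by triangulating $M$ as a mapping torus and summing the standard normalized $3$-cocycle of $\bZ_p$ over the resulting simplices, or by producing an explicit degree-one bundle map $M\to L(p,1)$ and invoking naturality of $\int\alpha$. Granting this computation, $\int\alpha|_{\bZ_p}(\gamma,e)$ is a primitive $p$-th root of unity whenever $\alpha$ generates $\rH^3(\rB\bZ_p;\rU(1))$, so the homomorphism is nonzero and hence injective.
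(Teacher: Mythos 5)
Your reduction is sound and clean: since $\rH^3(\rB\bZ_p;\rU(1))\cong\bZ_p$ has prime order, the integration map is injective as soon as it is nonzero on a single automorphism of $\cM^{\bZ_p}$, so it suffices to exhibit one mapping torus $M$ with $\rho_*[M]$ generating $\rH_3(\rB\bZ_p;\bZ)$, and your observation that the untwisted $\bT^3$ can never work (because $\rH_2(\rB\bZ_p;\bZ)=0$) correctly explains why a nontrivial $\gamma$ is forced. But the proof as written stops exactly at its decisive step: the nonvanishing $\langle u\cup\beta u,[M]\rangle\neq 0$ is what the entire lemma amounts to, and you offer only two unexecuted strategies for it. Neither is automatic for your particular hyperbolic $\gamma=\left(\begin{smallmatrix}1&1\\ p&p+1\end{smallmatrix}\right)$: a degree-one map $M\to L(p,1)$ exists only if the torsion linking forms agree, which you have not checked, and the state-sum over a triangulation of $M$ is a genuine computation you have not done. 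Moreover your bookkeeping is fragile here --- whether the fiber class carrying the holonomy even survives into $\mathrm{Tors}\,\rH_1(M)$ depends on a transpose convention for the monodromy that you do not pin down, and with the wrong convention $u$ would vanish outright. The gap is closable without either of your proposed methods: once you know $\mathrm{Tors}\,\rH_1(M)\cong\bZ_p$ and that $u$ is nonzero on it, $\beta u\neq 0$, and nondegeneracy of the torsion linking form on a cyclic group of prime order forces the self-linking of any nonzero element to be nonzero, whence $\langle u\cup\beta u,[M]\rangle\neq 0$. That one-line argument is what your writeup is missing.

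For comparison, the paper does not reprove the lemma but leans on the mechanism it sets up in \S\ref{partitionfunctionssection}: take the point $(e,g)$ with $g$ a generator, act by $\gamma=T^p$ (which fixes $(e,g)$ since $g^p=e$), observe that the resulting mapping cylinder is $\bZ_p$-equivariantly cobordant to the lens space $L(p,1)$, and read off the evaluation of the generator of $\rH^3(\bZ_p;\rU(1))$ from the spin selection rule: the $T^p$ multiplier on $Z_{e,g}$ is $\exp(2\pi i\ell/p)$, a faithful character of $\bZ_p$. That route avoids your crux entirely because the lens-space evaluation is classical. If you replace your hyperbolic $\gamma$ by $T^p$ you land on the same argument; if you keep your $\gamma$, you must supply the linking-form step above.
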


\begin{rem}
The question can also be phrased in another form that is in terms of extensions rather than anomalies. For concreteness,
suppose that a theory has as its symmetry group, $G = \bZ_p \times G'$ where $G'$ is a finite $p$-group.
The only anomaly is mixed, living in~$\rH^2(G'; \rH^1(\bZ_p; \rU(1))) = \rH^2(G'; \widehat{\bZ}_p)$. If we gauge the $\bZ_p$ symmetry as in \cite{Bhardwaj:2017xup} we get a central extension $\bZ_p.G'$ symmetry action for the gauged theory, where the extension data is the mixed anomaly \cite{Tachikawa:2017gyf}. The question is therefore equivalent to asking: can one work out which extension using only genus-one data?
\end{rem}
From a categorical point of view ``modular data" of a modular tensor category means looking at its corresponding $\SL(2,\bZ)$ representation. The modular tensor category $\cZ(\textbf{Vec}^\alpha [G])$ has modular data, and it was shown in \cite{mignard2017modular} that is is insufficient to determine $\alpha$.  However, it was shown by Kirillov Jr. \cite{kirillov2002modular} that $\cZ(\textbf{Vec}^\alpha [G])$ along with the full data of the subcategory $\textbf{Rep}(G)$ was sufficient to determine $\alpha$.  Our current problem is an intermediate of these two situations. On the one hand we have more than modular data because we also incorporate data of the group that the modular tensor category came from, hence the fact that we can conjugation elements of $\cM^G$ by group elements.  On the other hand, we do not have the full category $\textbf{Rep}(G)$ to apply the Kirillov Jr. construction.

 \section{Partition Functions}\label{partitionfunctionssection}

Performing the integral over the mapping cylinder is in general hard to do and involves knowledge of how to triangulate the manifold, however, there are instances when this can be done. We can consider the case in which the mapping cylinder in Figure \ref{mappingCylinder} is $G$-equivariantly cobordant to the Lens space $L(N,1)$, or when the twists applied in the third direction is trivial, yielding a 3-torus $\bT^3$. This is the case when we are only concerned with $\bZ^k_N$ groups and the anomaly $\alpha \in \rH^3(\bZ^k_N; \rU(1))$.  The third cohomology evaluates to $\bZ^{\left[\binom{k}{1}+\binom{k}{2}+\binom{k}{3}\right]}_N$ and the cocyles are of the following three forms:
\begin{subequations}\label{3cycles}
\begin{align}
        \alpha^I(a,b,c) &= \exp\left( \frac{2\pi i q^I}{N^2} a^I(b^I+c^I-[b^I+c^I])  \right) \,, \\
    \alpha^{IJ}(a,b,c) &= \exp\left( \frac{2\pi i q^{IJ}}{N^2} a^I(b^J+c^J-[b^J+c^J])  \right)\,, \label{double}  \\
    \alpha^{IJK}(a,b,c) &= \exp \left(\frac{2\pi i q^{IJK}}{N} a^I b^J c^K  \right)\,\label{Triple},
\end{align}
\end{subequations}
where the superscript indices take values in $\{1,\ldots,k\}$, and $a,b,c \in \bZ^k_N$ \cite{deWildPropitius:1995cf}.  We denote $[b^I+c^I]: = b^I+c^I \mod N $, and $q^I, q^{IJ}, q^{IJK}$ takes values mod $N$, meant as a representative of the cocycle.  To argue why there are $\binom{k}{2}$ many cocycles of the form in \eqref{double} we note that
the 3-cocycles $\alpha^{I J}$ and $\alpha^{J I}$ are equivalent, since they differ by a coboundary.  A similar argument holds for cocycles of the third type in \eqref{Triple}, and therefore there are only $\binom{k}{3}$ many, as permutations of the labels $I,J,K$ give equivalent cocycles up to coboundary. These three types of cocycles correspond to the generators of $\rH^3(\bZ^k_N; \rU(1))$,
which at the level of gauge fields corresponds to self coupling of the gauge fields, pairwise couplings of the gauge fields, or coupling each of the three distinct fluxes together.
Each of these cocycles corresponds to a theory in (2+1)d and is the action of a {$G$-SPT}.  By the anomaly inflow mechanism, we can think of our (1+1)d theory with anomaly $\alpha$ as the boundary of this bulk $G$-SPT. While the boundary theory is anomalous, the entire bulk boundary set up is non-anomalous, thus the SPT exactly captures the anomaly data in its action.
The partition function for the $G$-SPT when placed on $L(N,1)$ is sufficient to detect the first two types of cocycles, while the last is detectable when placed on $\bT^3$  \cite{Tiwari:2017wqf}.  In particular, the partition function for each of the SPTs is a $\rU(1)$ valued  topological invariant used to distinguish the phase.  The partition functions are built out of a \textit{response function}, which treats the symmetry $G$ as a flat background connection; these functions can be shown to match the expression for
the group cocycles in  \eqref{3cycles}.  
Evaluating the partition function, i.e. integrating over $L(N,1)$, amounts to integrating the response function over a homology 1-cycle that generates $\rH_1(L(N,1),\bZ)$. 
The set of invaraints for the three cycles in \eqref{3cycles} is given by 
\begin{align}
    \bigg\{ \exp\left( \frac{2\pi i \,q_I}{N}  a^2_I \right)\,,
    \quad \exp \left( \frac{2\pi i \, q_{IJ}}{N}  a_{I} a_{J}\right) \,, \quad
    \exp\left( \frac{2\pi i q_{IJK}  }{N}  \epsilon^{ijk} a_{I,i} \,b_{J,j}\, c_{K,k} \right) \bigg \}\,,
\end{align}
were the indices $i,j,k$ on the last factor indicate the cycles on $\bT^3$.
We can also consider general discrete Abelian groups which are always isomorphic to  $\prod^k_{I=1} \bZ_{N^I}$; the SPTs can be detected on $L(N^I,1), L(\gcd(N^I,N^J)\,,1)$ and $\bT^3$.

We can convert the problem involving integrating over the mapping cylinder into the language of partition functions.  In (1+1)d, these are objects which transform as a modular form with respect to $\tau$ on the moduli space of flat 2-tori. If our theory enjoys a symmetry $G$, then the torus base manifold of our theory is equipped with a $G$-bundle and the map $P: \bT^2 \to \rB G $  is a pair of commuting elements (up to conjugation) $g, k \in G$ each wrapping one of the cycles of the torus. We define the partition function, with $q = \exp{2 \pi i \tau}$ and $\overline{q} =\exp{-2 \pi i \overline{\tau}} $, as 
\begin{equation}
    Z_{g,k}(\tau,\overline{\tau}) = \Tr_{\cH_k}\left(g\, q^{h-\frac{c}{24}} \,\overline{q}^{\bar{h}-\frac{\bar c}{24}}\right),   \qquad
    \begin{minipage} {2.0 cm}
    \begin{tikzpicture}
           \draw[thick] (0,0) -- (2,0)--(2,2)--(0,2)--cycle; 
           \draw[thick,red](0,1)--(2,1);
           \draw[thick,blue](1,0)--(1,.9);
           \draw[thick,blue](1,1.1)--(1,2);
           \draw (2.25,1) node {\textcolor{red}{$g$}};
           \draw (1,-.25) node {\textcolor{blue}{$k$}};
    \end{tikzpicture}
    \end{minipage}
    \label{fig:my_label}
\end{equation}
which is a configuration that is \textit{twisted} by $g$ in the spatial direction, and \textit{twined} by $k$ in the time direction. The trace is over the \textit{defect Hilbert space}, this is from the $k$-defect intersecting the spatial circle and implements a twisted
periodic boundary condition \cite{chang2019topological}.  These partition  functions are precisely the sections of the line bundle defined by  $\int \alpha$ over the stack $\cM^G$. An anomaly then has to do with an obstruction to this line bundle being trivializable.
For a special case where $G = \bZ_N$ we can consider the component $(g,e)$, where $e$ is the identity, of $\cM^G$.
A modular $S$ transformation on the partition function exchanges the two cycles of the torus so the $g$ defect now acts at a fixed time and the partition function is 
\begin{equation}
    Z_{e,g}(\tau,\overline{\tau})=\Tr_{\cH_g}\left(q^{h-\frac{c}{24}} \,\overline{q}^{\bar{h}-\frac{\bar c}{24}}\right) = Z_{g,e}\left(-\tfrac{1}{\tau}, -\tfrac{1}{\overline{\tau}}\right).
\end{equation}
Under the $T$ transformation, which maps $\tau \to \tau+1$, we see that this partition function is modular up to a multiplier of a phase which records the line bundle. To compute the phase we note that the spins $h-\overline{h}$ of the states in the defect Hilbert space takes value in $\frac{\ell}{N^2}+\frac{\bZ}{N}$, where $\ell$ is an integer modulo $N$ \cite{Lin:2021udi}, where it is referred to as a \textit{spin selection rule}. This implies the following, which was also mentioned in \cite{Gaberdiel:2012gf}:
\begin{proposition}
If an anomaly of the $G$ action is given by $\ell \in \rH^3(\bZ_N; \rU(1))$, $T^N = \begin{pmatrix}
1 & N \\
0 & 1
\end{pmatrix}$ acts on $Z_{e,g}$ with multiplier $\exp\left(\frac{2 \pi i \ell}{N} \right)$ .
\end{proposition}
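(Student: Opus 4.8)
The plan is to track the phase that $T^N$ produces on each summand of the character $\Tr_{\cH_g}$ and to show, using the spin selection rule, that this phase is the same constant $\exp(2\pi i\ell/N)$ for every state, so that it factors out of the trace as a single global multiplier acting on $Z_{e,g}$.

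First I would record the effect of a single $T$. With $q=\exp(2\pi i\tau)$ and $\overline q=\exp(-2\pi i\overline\tau)$, sending $\tau\mapsto\tau+1$ multiplies the contribution of a state of weights $(h,\overline h)$ by $\exp\!\big(2\pi i[(h-\overline h)-(c-\overline c)/24]\big)$. Assuming the ambient CFT is a genuine (gravitationally non-anomalous) theory, so that $c=\overline c$ and the untwisted partition function $Z_{e,e}$ is honestly modular invariant, this reduces to $\exp(2\pi i(h-\overline h))$. The point worth emphasizing is that for a single $T$ this phase is \emph{not} state-independent: by hypothesis the spins in $\cH_g$ lie in $\tfrac{\ell}{N^2}+\tfrac{\bZ}{N}$, and the $\tfrac{\bZ}{N}$ part genuinely varies from state to state, so $T$ alone does not act by a scalar on $Z_{e,g}$.

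The resolution is to pass to $T^N$, i.e. $\tau\mapsto\tau+N$, which multiplies the state of weights $(h,\overline h)$ by $\exp(2\pi i N(h-\overline h))$. Invoking the spin selection rule $h-\overline h\in\tfrac{\ell}{N^2}+\tfrac{\bZ}{N}$ gives $N(h-\overline h)\in\tfrac{\ell}{N}+\bZ$, so the integer ambiguity is annihilated and every state contributes the identical phase $\exp(2\pi i\ell/N)$. Factoring this scalar out of the trace then yields $Z_{e,g}(\tau+N,\overline\tau+N)=\exp(2\pi i\ell/N)\,Z_{e,g}(\tau,\overline\tau)$, which is exactly the claimed multiplier.

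The main obstacle is not this arithmetic but the justification of the two inputs it rests on. The essential one is the spin selection rule itself, namely the identification of the anomaly class $\ell\in\rH^3(\bZ_N;\rU(1))\cong\bZ_N$ with the fractional part $\ell/N^2$ of the spins in the $g$-twisted sector; I would either cite this directly from \cite{Lin:2021udi,Gaberdiel:2012gf} or re-derive it through the anomaly-inflow picture, in which $\cH_g$ carries a projective $\bZ_N$-action whose obstruction is the restriction of $\alpha$ and whose associated fractional spin is precisely the genus-one avatar of $\int\alpha$. The secondary point to handle cleanly is the gravitational contribution $c-\overline c$; I would note that it enters only as an overall framing phase independent of $g$, so it does not affect the $\bZ_N$-multiplier and may be set trivial by modular invariance of the underlying theory.
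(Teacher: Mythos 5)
Your proposal is correct and follows essentially the same route as the paper: the paper also derives the multiplier by citing the spin selection rule $h-\overline h\in\tfrac{\ell}{N^2}+\tfrac{\bZ}{N}$ from \cite{Lin:2021udi} and observing that $T^N$ then contributes the uniform phase $\exp\bigl(2\pi i N(h-\overline h)\bigr)=\exp\bigl(\tfrac{2\pi i\ell}{N}\bigr)$ to every state in the defect Hilbert space (the same computation appears explicitly in the WZW section). Your additional remarks on the $(c-\overline c)/24$ framing phase and on why a single $T$ fails to act by a scalar are sound but do not change the argument.
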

 An immediate corollary is that knowledge of the partition function is sufficient to determine the anomaly for $\bZ_N$ groups. Going back to our picture using genus-one data and the mapping cylinder, this example for $\bZ_N$ groups would be what happens if we wrap $e,g$ along the cycles labeled by $x,y$ in Figure \ref{mappingCylinder} and
 apply $\gamma = T^N$ along the third direction giving the entire mapping cylinder the structure of a Lens space $L(N,1)$. Since the manifold used to detect the 3-cocycles for the case of a general discrete Abelian group is also a Lens space, or a 3-torus, then genus-one data is sufficient to detect anomalies of Abelian groups. Furthermore, it is sufficient to detect the anomaly for $S$ a $p$-group as in \S \ref{genus1section}, which has a  restriction to an Abelian $S'$. 

\begin{definition}\label{categoricalschur}
A subgroup $S \subseteq G$ is a \define{ categorical Schur detector  (CSD)} at $p$
if the restriction map $\rH^3(G; \rU(1)) \to \rH^3(S; \rU(1))$ on the $p$ parts is injective. More generally,
a set of subgroups $S \subseteq G$ is a \define{joint categorical Schur detector} at $p$ if the total restriction map $\rH^3(G; \rU(1)) \to \prod_{S}\rH^3(S; \rU(1))$ on the $p$ parts is injective.
\end{definition}
If the group $G$ has an Abelian joint CSD, i.e. one where all $S$ in the set are Abelian, then we would be able to detect the anomaly by our ability to integrate over Lens spaces for any Abelian group.
\begin{example}
The notion of CSD was also used in \cite{johnson2020h4} for cohomology in degree four where it was shown that for $G= \mathrm{Co}_0$, the linear isometry group of the Leech lattice, the restriction map $\rH^4(\mathrm{Co}_0; \bZ) \to \rH^4(S; \bZ)$ is injective, where $S$ is isomorphic to the product of the cyclic group
of order 3 and the binary dihedral or  group of order 16.
We see that a $p$-Sylow group of $G$ is also an example of a CSD at $p$ but the case in which $G$ is the extraspecial group $p^{1+2}_+$ does not have an individual CSD. The lack of a CSD comes simply from the fact that $\rH^3(G; \rU(1))$ has dimension 4 in $p$ while $\rH^3(S; \rU(1))$ only has dimension 3, so there is no injection. 
Take the case of $p=3$, it was shown in \cite{MinhEssential} that for $G$ an extraspecial $p$-group of order 27 with exponent 3 has no \textit{essential cohomology} in any degree. Essential cohomology is the $\bZ_p$-cohomology that gives the common kernel of the restrictions to all proper subgroups of $G$ as in definition \ref{categoricalschur}, i.e. the cohomology fits in the exact sequence 
\begin{equation}
    \rH^\bullet_{\text{Ess}}(G\,; \bZ_p) \to \rH^\bullet(G\,;\bZ_p) \to \prod_{S \subset G} \rH^\bullet(S\,; \bZ_p)\,.
\end{equation}
When we restrict to degree three, specifically with $\rU(1)$ coefficients by the standard long exact sequence, this measures the failure for there to be a joint CSD, so vanishing essential cohomology indicates there is a joint CSD in this case. 

An important and natural question is how to classify $p$-groups with non-zero
essential cohomology.  Let $G$ be an elementary Abelian $p$-group with rank $i>0$, the cohomology ring of $G$ is standard and given by 
\begin{align}
    \rH^\bullet(G\,; \bZ_p) = \begin{cases}
    \bZ_p[x_1,\,x_2\ldots,\,x_i] \hspace{36mm} p=2, \quad \text{deg}(x_i)=1 \\
    \bZ_p[x_1,\,x_2\ldots,\,x_i] \otimes \bigwedge (y_1,\,y_2,\ldots,\,y_i) \quad  p>2, \quad  2\,\text{deg}(y_i)=\text{deg}(x_i)=2\,.
    \end{cases}
\end{align}
For $p=2$, $\rH^\bullet(G\,; \bZ_p) \neq 0$, and for $p>0$ the essential cohomology is the Steenrod Closure of the product of $y_1 \cdots y_i$  \cite{Aksu2009EssentialCA}.
It was conjectured in \cite{CARLSON2001229} that the
essential cohomology of an arbitrary $p$-group is free and finitely generated over a certain polynomial subalgebra in $\rH^\bullet(G\,; \bZ_p)$; this conjecture holds for elementary $p$-groups.

\end{example}
Let us move to the case in which the global symmetry is a group that has even order by considering the dicyclic, or binary dihedral, group $\Dic_{N}$. A special case is $Q_8$ which is also an extraspecial group of order 8. We will show that:
\begin{proposition}\label{CSDforSU2}
Let $G$ be a subgroup of $\SU(2)$, then $G$ has no joint CSD.
\end{proposition}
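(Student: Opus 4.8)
The plan is to localize the whole question at the prime $2$ and to exploit the single structural fact that every finite subgroup of $\SU(2)$ has periodic cohomology. First I would recall the classification of finite subgroups of $\SU(2)$ (the cyclic groups, the binary dihedral groups $\Dic_N$, and the binary tetrahedral, octahedral and icosahedral groups), although the only input I really need is that each such $G$ acts freely on $S^3 = S(V)$, where $V = \bC^2$ is the defining representation. Freeness gives periodic cohomology of period dividing $4$, and hence the two facts I will lean on: every abelian subgroup of $G$ is cyclic, the Sylow $2$-subgroup is cyclic or generalized quaternion, and $\rH^3(G;\rU(1)) \cong \rH^4(G;\bZ) \cong \bZ_{|G|}$, with the Euler class $e(V) = c_2(V) \in \rH^4(G;\bZ)$ serving as a generator.

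Next I would isolate the candidate undetectable class. Since the $2$-primary part $\rH^3(G;\rU(1))_{(2)} \cong \bZ_{2^n}$ (with $2^n = |\mathrm{Syl}_2 G|$) is cyclic, a family of subgroups detects it on the $2$-part if and only if it detects the order-$2$ socle class $\sigma := (|G|/2)\,e(V)$, which is exactly the class carried by the unique subgroup $\langle -I\rangle$ of order $2$ (recall $-I$ is the only involution in $\SU(2)$). The key computation is that for any subgroup $H \le G$ the restriction $e(V)|_H = c_2(V|_H)$ is again the Euler class of a free representation, hence a generator of $\rH^4(H;\bZ) \cong \bZ_{|H|}$. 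Therefore $\sigma|_H = (|G|/2)\,e(V)|_H$ is zero precisely when $|H|$ divides $|G|/2$, that is, precisely when $v_2(|H|) < v_2(|G|)$ — in other words, exactly when $H$ fails to contain a full Sylow $2$-subgroup of $G$.

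The heart of the argument is then to show that the subgroups available to a joint categorical Schur detector never contain a full Sylow $2$-subgroup. This is where the generalized quaternion structure enters: the maximal cyclic subgroups of a generalized quaternion group $Q_{2^n}$ have order $2^{n-1}$, so when $\mathrm{Syl}_2 G$ is noncyclic no abelian (equivalently, by periodicity, no cyclic) subgroup attains $v_2(|H|) = v_2(|G|)$, and when $G$ is itself a $2$-group no proper subgroup does. In either case $\sigma$ lies in the common kernel of restriction to the whole family, so the joint restriction map is not injective on the $2$-part and $G$ has no joint CSD; for $2$-group building blocks this is the statement that the degree-$3$ essential cohomology is nonzero, and the surviving class is precisely the central $\bZ_2$ that reappears in Proposition~\ref{notDetectable}.

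The main obstacle I anticipate is twofold. The cleaner half is the bookkeeping over which family a joint CSD is permitted — one must be explicit that only proper (and, for the Lens-space/genus-one story, abelian) subgroups are in play, since $\{G\}$ trivially detects everything. The genuinely delicate half is verifying uniformly, across the classification, that every relevant subgroup acts freely on $S^3$ so that $e(V|_H)$ is a generator of $\rH^4(H;\bZ)$, and that the subgroup lattice of the generalized quaternion Sylow $2$-subgroup — a unique involution sitting inside cyclic maximal subgroups — is exactly what renders the socle invisible. Organizing the proof around the single witness class $\sigma$ coming from $\langle -I\rangle$, rather than checking each ADE type separately, is what should keep the argument uniform.
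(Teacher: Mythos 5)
Your route is essentially the paper's: both arguments rest on the periodicity of the cohomology of a finite group acting freely on $S^3$, so that $\rH^3(G;\rU(1))\cong\bZ_{|G|}$ and restriction to $H\le G$ is the order-determined surjection $\bZ_{|G|}\to\bZ_{|H|}$. Your Euler-class step ($e(V)|_H=e(V|_H)$ is again a periodicity generator) supplies the justification the paper omits when it simply asserts that restriction ``only depends on the order of $S$,'' and your reduction to the socle class $\sigma=(|G|/2)\,e(V)$ makes precise why injectivity on the $2$-part is equivalent to some admissible subgroup containing a full Sylow $2$-subgroup. One caveat, which your own case split exposes and which the paper's proof glosses over: the dichotomy ``$\mathrm{Syl}_2G$ noncyclic, or $G$ a $2$-group'' does not exhaust the finite subgroups of $\SU(2)$. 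For $\Dic_N$ with $N$ odd a reflection generates a cyclic subgroup of order $4$ that \emph{is} a full Sylow $2$-subgroup, hence a proper abelian CSD at $2$; likewise a cyclic $G$ of composite order is jointly detected by its Sylow subgroups. So the proposition as literally stated fails for those groups, and both your argument and the paper's really prove the statement under the hypothesis that the Sylow $2$-subgroup is generalized quaternion (equivalently, noncyclic) --- which is exactly the regime ($Q_8$, $\Dic_N$ with $N$ even) used in the rest of the paper, where your identification of $\sigma$ with the surviving central $\bZ_2$ matches the kernel $\{0,2N\}\subset\bZ_{4N}$ found later.
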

  Recall that the dihedral group $\Dih_{N}$, a group of order $2N$ is the group of symmetries of a ${N}$-gon and lives as a subgroup of $\SO(3)$, where the reflection is implemented as a 180 degree rotation in 3d.  An 180 degree rotation lifts with order four to the double cover $\Spin(3)=\SU(2)$.  The restriction of $\Spin(3)$ along the map from $\rO(2) \to \SO(3)$ leads to the group $\Pin^-(2)$, where reflections square to $-1$.  The further restriction of $\Pin^-(2)$ along the map from $\Dih_{N} \to \rO(2)$ leads to $\Dic_{{N}}$;
the bindary dihedral groups are the ``discrete" versions of $\Pin^-(2)$. {This is summarized in the diagram below}
\begin{equation}
\begin{tikzcd}
\Dic_N \arrow[r] \arrow[dr, phantom, "\ulcorner", very near start] \arrow[d] & \Dih_{N} {} \arrow[d] \arrow[d, two heads] \\
\Pin^-(2){} \arrow[d] \arrow[dr, phantom, "\ulcorner", very near start] \arrow[r]  & \rO(2){} \arrow[d, two heads]           \\
\Spin(3){} \arrow[r]            & \SO(3)\,.{}                    
\end{tikzcd}
\end{equation}
\begin{proof}[Proof of Proposition \ref{CSDforSU2}]
The bindary dihedral group $G$ acts faithfully on $\Spin(3)$ which has the topology of a three-sphere. There is a fibration 
\begin{equation}
  \begin{tikzcd}
S^3/G \arrow[r] & S^3 \arrow[d] \\
                & \rB G           
\end{tikzcd}
\end{equation}
where $S^3/G$ is an oriented three-manifold, so has cohomology in degree three and below. From the fibration one can compute the group cohomology of $\rB G$. It is known that for any finite subgroup $G$ of the three-sphere that (see for example, \cite{epa2016platonic}):
 \begin{align}
    \rH^i(\rB G; \rU(1)) = \begin{cases}
    \rU(1) \quad  i=0\,,\\
    G^{ab} \quad i \equiv 1 \!\mod 4\,, \\
    \bZ_{|G|} \quad i \equiv 3 \!\mod 4\,, \\
    0 \quad \quad i > 0 \quad \! \text{and even}\,, \\
    \end{cases}
        \end{align}
where $G^{ab}$ denotes the abelianization of $G$ and $ \bZ_{|G|}$  denotes the group of complex $|G|$-th roots of unity.
If $S$ is a subgroup of $G$ then the restriction $\rH^3(G; \rU(1)) \to \rH^3(S; \rU(1))$ is a surjection and loses information about which subgroup of $\rH^3(G; \rU(1))$ it is, as it only dependents on the order of~$S$.
\end{proof}

We now see if genus-one data can still detect the anomaly. Let $G =  \Dic_N$ and consider wrapping the commuting pair of an element $g\in G$ and the identity $e$ around the cycles of the two-torus. As per Figure \ref{mappingCylinder}, we will let the group element which runs along the third direction of the mapping cylinder be $h$.  The elements $g$ (and $h$) could be reflections or rotations i.e. $g^{2N} = c$ or $g^{2}=c$ where $c$ is the central element. By \eqref{genusOneConstraint} it must be that 
\begin{equation}
    \gamma \begin{pmatrix} e \\
    g\end{pmatrix} = \begin{pmatrix}e\\ h\, g\, h^{-1} \end{pmatrix},
\end{equation}
and furthermore this is the most complicated configuration for the binary dihedral group that the constraints of genus-one will allow.  Take $g$ to be rotation, and $h$ to be a reflection, then $h g h^{-1}$ is $g^{-1}$.  So what are the possible $\gamma$'s? The second component of the vector after acting by $\gamma$ is $e^c g^d$ which must equal $g^{-1}$, thus $d = 2N-1$ and $c$ is free to be anything.  The first component is $e^a g^b = 1$. So $a$ is free but $g^b=1$, so $b=b'(2N)$.  In this case \begin{equation}
    \gamma = \begin{pmatrix}
    a &\,\quad b'(2N) \\
    c &\,\quad 2N-1
    \end{pmatrix}.
\end{equation}
We will take the matrix entries of $\gamma$  modulo $2N$ since the rotations is a cyclic group of order $2N$, and use the fact that $\det \gamma = 1$. But because $b$ is zero mod $2N$ the two valid matrices are
\begin{equation}
    \begin{pmatrix}
    -1 &\, 0 \\
    -c &\, -1
    \end{pmatrix}\,,\qquad  \begin{pmatrix}
    -1 &\, -2N \\
    0 &\, -1
    \end{pmatrix}\,,
\end{equation}
note that since $c$ was free to take any value mod $2N$, we write it as $-c$ in the matrix. We now take $h$ to be a rotation, and $g$ to be a reflection.  Then, in order for $hgh^{-1} = g^d$, it must be that $hg = g^d h$, which implies $g^{-1}h g = g^{d-1} h $ and so $h^{-1}= g^{d-1} h$.  But $h^{-2} = g^{d-1}$ has no solutions in general if $h$ is a generator of rotations.  For example, in the case of a 2-gon or 4-gon, it is possible to satisfy the equality. If $h$ and $g$ are both reflections then on the one hand $hgh^{-1}$ is given by taking $g$ and reflecting about the $h$ axis.  The value of $h g h^{-1}$ is $ g$ or $-g$ if $h$ and $g$ are the same reflection or off by 90 degrees, respectively. On the other hand, when acting by $\gamma$ we have $ e^c g^d = \pm g$ depending on whether $d$ is even or odd. 
 The case where $g=h$ is uninteresting, so the set of $\gamma$ is spanned by~$\begin{pmatrix}
 1& 0\\
 c& 1
\end{pmatrix}$.  If $h$ and $g$ were both rotations and thus cyclic subgroups, we know that restriction to any subgroup is not injective, so that will in general not be optimal for allowing us to detect the anomaly. 
We conclude that: 
\begin{proposition}
For the dicyclic group $\Dic_N$ where the anomaly is a $4N$-th root of unity, genus-one data contains the most information is when the whole group can be generated, i.e. when $g$ is a generator of rotation, and $h$ is a reflection. 
\begin{figure}[!h]
\adjustbox{scale=.8}{
    \centering
    \begin{tikzpicture}
    \draw[thick] (0,0) ellipse (1cm and 2cm);
    \draw[thick,domain=180:360,smooth,variable=\x] plot ({.2+.4*sin(\x)},{.9*cos(\x)});
\draw[thick,domain=23:157,smooth,variable=\x] plot ({-.1+.4*sin(\x)},{.9*cos(\x)});
\draw[thick] (0,2) -- (3.5,2);
\draw[thick] (0,-2) -- (3.5,-2);
\begin{scope}[rotate=-170,shift={(-.7,.1)}]
\draw[thick,red,domain=10:130,smooth,variable=\x] plot ({.48*cos(\x)},{.28*sin(\x)});
\draw[thick,red,dashed,domain=190:300,smooth,variable=\x] plot ({.2+.5*cos(\x)},{.3+.4*sin(\x)});
\end{scope}
\draw[thick,blue] (0,.1) ellipse (.6cm and 1.4cm);

\node at (.7,-.5) {\footnotesize$\color{red}{g}$};
\node at (.7,.75) {\footnotesize$\color{blue}{e}$};
\node at (1.75,-2.5) {\scalebox{1.25}{$\color{black}{\gamma}, h$}};
\end{tikzpicture}
}
\end{figure}

This forces $\gamma$ to be in the coset of the matrices
\begin{equation}\label{allgamma}
    \begin{pmatrix}
     -1 &\, 0\\
     -c &\, -1
     \end{pmatrix}\,, \qquad \begin{pmatrix}
     -1 &\, -2N \\
     0 &\, -1
     \end{pmatrix}. 
\end{equation}
\end{proposition}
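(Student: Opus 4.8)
The plan is to proceed by a case analysis on the two element-types in $\Dic_N$—rotation versus reflection—for the pair $(g,h)$, to apply the genus-one constraint \eqref{genusOneConstraint} to the chosen object $(e,g)$, and then to argue that a single case is simultaneously the one generating all of $\Dic_N$ and the one admitting a nontrivial solution for $\gamma$. First I would record the relevant structure of $\Dic_N$: it has a cyclic rotation subgroup $\langle r\rangle\cong\bZ_{2N}$ together with reflection-type elements of order four squaring to the central element, and conjugating a rotation by a reflection inverts it. Specializing \eqref{genusOneConstraint} to $(e,g)^{\mathsf{T}}$ gives $\gamma\triangleright(e,g)^{\mathsf{T}}=(e,g)^{\mathsf{T}}\triangleleft h=(e,\,hgh^{-1})^{\mathsf{T}}$, and expanding the left action into components reduces the whole problem to the two scalar equations $g^{b}=e$ and $g^{d}=hgh^{-1}$.

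I would then dispatch the four cases. With $g$ a rotation and $h$ a reflection, $hgh^{-1}=g^{-1}$, forcing $d\equiv 2N-1$ and $b\equiv 0\pmod{2N}$ with $a,c$ free modulo $2N$; imposing $\det\gamma=1$ on the two integer lifts $b=0$ and $b=-2N$ then isolates exactly the pair in \eqref{allgamma}. With $h$ a rotation and $g$ a reflection, $hgh^{-1}=g^{d}$ rearranges to $h^{-2}=g^{d-1}$, which is unsolvable for a rotation generator $h$ outside the degenerate low-order cases $N=1,2$. With both reflections, $hgh^{-1}\in\{g,g^{-1}\}$ collapses the admissible matrices to the lower-unipotent family $\begin{pmatrix}1&0\\c&1\end{pmatrix}$; and with both rotations, $\langle g,h\rangle$ is cyclic, so restriction is already non-injective by the proof of Proposition \ref{CSDforSU2}. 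Comparing the generated subgroups, only the first case has $\langle g,h\rangle=\Dic_N$, so it is the configuration capable of probing the full class in $\rH^3(\Dic_N;\rU(1))=\bZ_{4N}$, and its admissible $\gamma$ are precisely the coset \eqref{allgamma}.

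The main obstacle is to make the phrase \emph{contains the most information} precise rather than heuristic. I would interpret it as maximizing the subgroup $\langle g,h\rangle$ on which $\alpha$ is effectively restricted, and then show that every configuration generating a proper subgroup loses anomaly data: the rotation-rotation case restricts to a cyclic group, where by the surjectivity-with-loss statement established in the proof of Proposition \ref{CSDforSU2} distinct $4N$-th roots become indistinguishable, and the reflection-reflection case generates a strictly smaller dihedral-type subgroup with the same defect. Establishing that the full-group configuration is the unique one that is both generating and compatible with a solution of \eqref{genusOneConstraint}—thereby eliminating the mixed and rotation-rotation cases on these two complementary grounds—is the crux; once it is in place, the determinant bookkeeping fixing the coset \eqref{allgamma} is routine.
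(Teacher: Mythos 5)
Your proposal is correct and follows essentially the same route as the paper: the same four-way case analysis on the types of $g$ and $h$, the same reduction of \eqref{genusOneConstraint} to the scalar conditions $g^b=e$ and $g^d=hgh^{-1}$, and the same determinant bookkeeping isolating the two matrices in \eqref{allgamma}. Your explicit framing of ``most information'' as maximizing $\langle g,h\rangle$ is a slightly more careful articulation of what the paper leaves implicit, but it is not a different argument.
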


Recall that when $G$ is just a cyclic group and $h$ is trivial, the choice of acting on the partition function by~$\begin{pmatrix}
     1 &\, |G| \\
     0 &\, 1
     \end{pmatrix}$
extracted a nontrivial $G$-th root of unity eigenvalue. Acting by the first matrix in \eqref{allgamma} shifts the modulus from $\tau \mapsto \frac{\tau}{\tau+c}$ and amounts to applying $T^c$ and then $S$ transformations to the partition function $ \Tr_{\cH}\left(g\, q^{h-\frac{c}{24}} \,\overline{q}^{\bar{h}-\frac{\bar c}{24}}\right)$, where $g$ is wrapped in the spatial direction.  By the spin selection rule, applying $T^c$ for $c \mod |G|$ will not produce a $|G|$-th root of unity.    We therefore expect that the second matrix in \eqref{allgamma} will detect the anomaly. We can test this on a theory which has as its symmetry a general dicyclic group, and defer the computation of the anomaly for a specific partition function and symmetry group to the next section. Let~$\cT_g = S\,\big[ \Tr_{\cH}\left(g\, q^{h-\frac{c}{24}} \,\overline{q}^{\bar{h}-\frac{\bar c}{24}}\right)\big]$, then acting by $\begin{pmatrix}
     -1 &\, -2N \\
     0 &\, -1
     \end{pmatrix}$ gives 
\begin{equation}
    \cT_g \xrightarrow{ -(T^{2N})} \exp\left(\frac{ \pi i \ell }{N}\right) \cT_{g^{-1}}\,,
\end{equation}
 where we have used that fact that $\begin{pmatrix}
     -1 &\, 0 \\
     0 &\, -1
     \end{pmatrix} \cT_g = \cT_{g^{-1}}$. But $\cT_{g^{-1}} = \cT_{hgh^{-1}}$, since $g$ is a rotation and $h$ is a reflection, and $\cT_{hgh^{-1}} = \cT_{g}$ by cyclicity of the trace.  At best we are able to detect only a $2N$-th root of unity. The map $ \rH^3(\rB G; \rU(1)) \overset{\int}{\to} \rH^1(\cM^G; \rU(1))$ therefore has a kernel that is at least of order 2. 
     
     The restriction to the cyclic subgroup of rotations gives $\{0,\,2N\}\!\! \mod 4N$ as the elements of the $\bZ_2$ kernel. One could hope to detect an anomaly $\alpha \in \{0,\,2N\}$. A common strategy when faced with anomalies and extensions of a group is to gauge some symmetry subgroup. The group $\Dic_{N} = C . \bZ_2$ as a nonsplit, noncentral extension with $C \cong \bZ_{2N}$ a normal subgroup. By the Serre spectral sequence we have $\rH^\bullet(\Dic_N\,; \rU(1)) \Leftarrow \rH^\bullet(\bZ_2\,;\rH^\bullet(C\,;\rU(1)))$, with the $E_2$ page:
     \begin{equation}
          E^{ij}_2 = \,\begin{array}{c|cccccccccc}
          j\\
          \\
          \Sym^2 \widehat{C} & \Sym^2 \widehat{C}& \\
          0 & 0&0& \ldots \\
           \widehat{C}&\bZ_2&\bZ_2&\bZ_2&\ldots\\
           \rU(1)&\rU(1)&\bZ_2&0&\bZ_2&0\\
          \hline 
           & 0 & 1 & 2 & 3 & 4 &  \quad i\,.
          \end{array}
     \end{equation}
     Where $\widehat{C}$ denotes the Pontryagin dual of $C$, and is the dual symmetry after gauging $C$. The entry $ \Sym^2 \widehat{C}$ survives on the $E_\infty$ page because it is the image of the restriction map $\rH^3(\Dic_N\,;\rU(1)) \twoheadrightarrow  \rH^3(C\,;\rU(1))$. The $\bZ_2$ in bidegree $(2,1)$ survives on $E_\infty$ for degree reasons; along with $\Sym^2 \widehat{C}$, these two contribute an order of already $4N$, and so the $d_2: E^{1,1}_2 \to E^{3,0}_2$ must be an isomorphism. The data of $\alpha$ living purely over $\bZ_2$ in $(2,1)$ now becomes the extension of the groups $\widehat{C}.\bZ_2$ for the gauged theory. In particular, this group is dihedral if $\alpha=0$ and again dicyclic if $\alpha = 2N$. Reflections lift with order 2 in former case, and order 4 in the latter. A reflection $h$ in the ungauged theory squares to $-1$ in the group, and lives on in $\bZ_2$ part of the gauged theory.  However, this is insufficient to tell if this $h$ is $-1$ in the $\bZ_2$ action, and thus in conclusion we are unable to distinguish the elements in the kernel.

\section{WZW Example}\label{WZWexampleSection}
In this section we present an example of attempting to detect the anomaly in a WZW theory with symmetry $G = Q_8$, and failing to fully capture all possible values of the anomaly.  The quaternion group not only fits the bill for Proposition \ref{CSDforSU2} but 
from the point of view of essential cohomology, it was shown in \cite{Adem1997EssentialCO} that a $p$-group has essential cohomology if all its elements of order $p$ are central. $Q_8$ is the unique group in which every element of order 2 is central.
We consider the WZW theory $\widehat{\SU}(2)_k$ which has ~$\frac{\SU(2)_{\rL}\times \SU(2)_{\rR}}{\bZ_2}$ symmetry (see \cite{cheng2020relative} for a summary of symmetries for WZW CFTs), and anomaly $(k,-k)$. We can consider the $\SU(2)_{\rL}$ symmetry, to which $Q_8 \subset \SU(2)_{\rL}$.  From computing $\rH^3(\rB Q_8\,; \rU(1))$, we know that this group should admit an anomaly that is mod 8 and therefore we take $k$ also mod 8. We deem that the anomaly is detectable if we can extract the full $\bZ_8$ group for the range of $k$. The generator $g$ of the $\bZ_4$ group of rotation is placed on one cycle of the torus, and the identity $e$ is placed on the other due to the fact that the pair must commute. The characters of $\widehat{\SU}(2)_k$ are given by the Weyl-Kac character formula and take the form \cite[Section 11]{Blumenhagen:2013fgp}:
\begin{equation}
    \chi^k_{\ell}(\tau,z) = \frac{\Theta_{\ell+1,k+2}(\tau,z)-\Theta_{-\ell-1,k+2}(\tau,z)}{\Theta_{1,2}(\tau,z)-\Theta_{-1,2}(\tau,z)}\,,
\end{equation}
with $0\leq \ell < k$ and the generalized $\SU(2)$ $\Theta$-functions defined as
\begin{equation}
    \Theta_{\ell,k}(\tau,z) = \sum_{n\in \bZ+\frac{\ell}{2k}}\, q^{kn^2}e^{-2\pi i n kz}\,.
\end{equation}
The partition function is defined by $Z(\tau,\overline{\tau},z,\overline{z}) = \sum_{j=1}^k \chi^k_j \,\overline{\chi}^k_j$. When twisted in the spatial direction by $g$, this gives 
\begin{equation}\label{twistedWZW}
    Z_{g,e} = \Tr_{\cH}\left(g\, q^{h-\frac{c}{24}} \,\overline{q}^{\bar{h}-\frac{\bar c}{24}}\,e^{-2\pi i z \hat{j}^3}\, e^{2\pi i \overline{z} \hat{j}^3}\right)\,,
\end{equation}
where $z$ is the chemical potential for the $\rU(1)$-charge and $\hat{j}^3$ plays the role of the operator which has as its eigenvalue the $\frac{\bZ}{2}$ representation of $\SU(2)$ in the usual angular momentum algebra. When $g$ wraps in the time direction, after applying an $S$ transformation to \eqref{twistedWZW}, we see that by conjugation we can take $g$ to $\hat{j}^3$ and thus giving
\begin{equation}
     \Tr_{\cH_{g}}\left( q^{h-\frac{c}{24}} \,\overline{q}^{\bar{h}-\frac{\bar c}{24}}\,e^{-2\pi i (z+\frac{1}{4}) \hat{j}^3}\, e^{2\pi i \overline{z} \hat{j}^3}\right)\,.
\end{equation}
This is because any element $\kappa\in Q_8$ can be written as $i \sigma_i$, for some Pauli matrix $\sigma_i$, in the Lie algebra of $\SU(2)$ and all $\SU(2)$ elements are conjugate to each other. When $g$ is applied in the spatial direction, unless $g$ is the central element, this breaks the global symmetry to $\rU(1)$ which is the centralizer of $g$. Any meaningful partition functions could then only have $\rU(1)$ elements wrapping the time direction, in particular, a $\rU(1)$ group spanned by $\exp\left(-2\pi i z \hat{j}^3 \right)$. Applying the $S$ transformation to \eqref{twistedWZW} so that $g$ is wrapped in the time direction essentially amounts to shifting~$z\mapsto z+\frac{1}{4}$.
Then applying $-(T^{2N})$, for $N=2$, on the partition function and using the spin selection rule gives a phase $\exp\left(2\pi i\, (2N)\left(\frac{\ell}{(2N)^2}+\frac{\bZ}{2N}\right) \right) = \exp\left(\frac{ \pi i \ell}{N} \right)$, which is only a fourth-root of unity.
\begin{rem}
There is another analogous computation we can conduct with free fermions.  The fact that the dicyclic group is a subgroup of $\SU(2)$ means it acts on $\bC^2$. It therefore also acts on the vertex algebra of two complex fermions.
The generators of rotations and reflections in this case are respectively 
\begin{equation}
    g \mapsto \begin{pmatrix}
     e^{ \pi i/N} & 0\\
     0 & e^{ -\pi i/N}
    \end{pmatrix}\,, \quad 
    h \mapsto \begin{pmatrix}
     0 & 1 \\
     -1& 0 
    \end{pmatrix}.
\end{equation}
We can compute twisted-twining genera for this vertex algebra, and just see how these modular functions transform under $\SL(2,\bZ)$. One subtlety to note is that this method uses a fermionic theory whose anomalies are classified differently than in the bosonic case. One should then take the appropriate restriction to the bosonic part of the full anomaly.
\end{rem}
\bibliography{Genus1}{}
\bibliographystyle{alpha}
\end{document}